\documentclass[showpacs,amsmath,amssymb,twocolumn,pra]{revtex4-1}

\usepackage[dvips]{graphicx} 
\usepackage{amsfonts}
\usepackage{amssymb}
\usepackage{amscd}
\usepackage{amsmath}    
\usepackage{enumerate}
\usepackage{epsfig}
\usepackage{subfigure}
\usepackage{xcolor}
\usepackage{amsthm}
\usepackage{framed}
\usepackage[normalem]{ulem}

\newtheorem{lemma}{Lemma}

\newcommand{\bra}[1]{\mbox{$\left\langle #1 \right|$}}
\newcommand{\ket}[1]{\mbox{$\left| #1 \right\rangle$}}

\begin{document}
\preprint{APS/123-QED}
\title{Intrinsic randomness as a measure of quantum coherence}
\date{\today}
\author{Xiao Yuan}
\author{Hongyi Zhou}
\author{Zhu Cao}
\author{Xiongfeng Ma}
\affiliation{Center for Quantum Information, Institute for Interdisciplinary Information Sciences, Tsinghua University, Beijing, China}

\begin{abstract}

Based on the theory of quantum mechanics, intrinsic randomness in measurement distinguishes quantum effects from classical ones. From the perspective of states, this quantum feature can be summarized as coherence or superposition in a specific (classical) computational basis. Recently, by regarding coherence as a physical resource, Baumgratz et al.~present a comprehensive framework for coherence measures. Here, we propose a quantum coherence measure essentially using the intrinsic randomness of measurement. The proposed coherence measure provides an answer to the open question in completing the resource theory of coherence. Meanwhile, we show that the coherence distillation process can be treated as quantum extraction, which can be regarded as an equivalent process of classical random number extraction. From this viewpoint, the proposed coherence measure also clarifies the operational aspect of quantum coherence. Finally, our results indicate a strong similarity between two types of quantumness --- coherence and entanglement.


\end{abstract}
\maketitle

\section{introduction}
As one of the fundamental laws of quantum mechanics, Born's rule \cite{born1926quantentheorie} endows the real world with true randomness that does not exist in the classical Newtonian theory. Such is the counter-intuitiveness of the result that Einstein was quoted as saying `God does not play dice'. Nevertheless, the intrinsically random nature of  measurement outcomes is now considered a key characteristic that distinguishes quantum mechanics from classical theory \cite{bell}.

As a key feature of quantum mechanics,  coherence is often considered as a basic ingredient for quantum technologies \cite{giovannetti2011advances, lambert2013quantum}.
Considerable effort has been undertaken to theoretically formulate the quantum coherence \cite{Glauber63, Sudarshan, luo2005quantum, Aberg06, monras2013witnessing,Baumgratz14,Aberg14,Girolami14}. Recently, a comprehensive  framework of coherence quantification was established \cite{Baumgratz14}, by which coherence is considered to be a resource that can be characterized, quantified, and manipulated in a manner similar to that of another important feature--- quantum entanglement \cite{Bennett96, Vedral98, Plbnio07, Horodecki09}. Within the resource framework of coherence, several coherence measures are proposed based on relative entropy, $l_1$-norm \cite{Baumgratz14}, and skew-information \cite{Girolami14}. A thorough understanding of the resource theory of coherence is left as an interesting open question \cite{Baumgratz14}.

In measurement theory, decoherence, breaking coherence or superposition, in a specific (classical) computational basis results in random outcomes \cite{Zurek09}. Intuitively, from the resource perspective, randomness can be generated by consuming coherence of a quantum state. In order to quantitatively establish this connection, one needs to find a proper way to assess the randomness of measurement, which normally contains quantum and classical processes. The superficially random outcomes in classical processes are generally not truly random, although they might appear so if information is ignored. Thus, such classical part of randomness should be precluded when quantifying a quantum feature --- coherence. A quantum process, on the other hand, can generate genuine randomness, which we call intrinsic (quantum) randomness. Observing such intrinsic random outcomes of measurements would indicate non-classical (quantum) features of objects.

As an example, let us consider the famous Schr\"odinger's cat gedanken experiment as shown in Fig.~\ref{Fig:cat}. In a classical world, a cat might be either alive or dead  before observation, which can be described by the density matrix $\rho_{\mathrm{cat}}^\mathrm{C} = \left(\ket{\mathrm{live}}\bra{\mathrm{live}} + \ket{\mathrm{dead}}\bra{\mathrm{dead}}\right)/2$ for the case of being alive and dead equally likely, see Fig.~\ref{Fig:cat}~(a). The observation result of whether the cat is alive or dead looks random, which is due to the lack of knowledge of the cat system. After considering some hidden variables or an ancillary system $E$ that purifies $\rho_{\mathrm{cat}}^\mathrm{C}$, $\ket{\Psi} = \left(\ket{\mathrm{live}}\ket{0}_E + \ket{\mathrm{dead}}\ket{1}_E\right)/\sqrt{2}$, we can simply observe the system $E$ to infer whether the cat is alive or dead. In quantum mechanics, the cat can be in a coherent superposition of the states of alive and dead, $\rho_{\mathrm{cat}}^\mathrm{Q} = \ket{\psi}\bra{\psi}$, where $\ket{\psi} = \left(\ket{\mathrm{live}} + \ket{\mathrm{dead}}\right)/\sqrt{2}$, see Fig.~\ref{Fig:cat}~(b). The observation outcome would be intrinsically random according to Born's rule. That is, without directly accessing the system of the cat and breaking the coherence, we can never predict whether the cat is alive or dead better by blindly guessing.
Therefore, the existence of intrinsic randomness can be regarded as a witness for quantum coherence.

\begin{figure}[hbt]
\centering \resizebox{6cm}{!}{\includegraphics{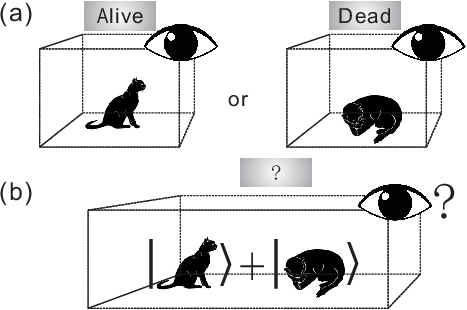}}
\caption{Illustration of Schr\"odinger's cat gedanken experiment.} \label{Fig:cat}
\end{figure}

With such strong evidence of the connection between coherence and intrinsic randomness, a natural question is whether we can consider the intrinsic randomness as a measure of coherence. If this is possible, production of a certain amount of intrinsic randomness will inevitably cause consumption of the same amount of coherence.

In this study, we explicitly answer this question by first proposing a coherence measure using intrinsic randomness and thus show the equivalence of the definitions between intrinsic randomness and quantum coherence. Then, we present a coherence distillation protocol for pure states and show that it is equivalent to random number extraction. Our distillation protocol provides an operational meaning to coherence, thus it answers the open question, stated in the literature \cite{Baumgratz14}, on the resource aspect of quantum coherence. Next, by noticing the similarity to the entanglement of formation (EOF) \cite{Bennett96, Hill97, Wootters98}, we provide an explicit way to evaluate our coherence measure for the qubit case. It is worth mentioning that the proposed measure is the first convex roof measure for coherence. Finally, we compare the two quantumness measures, coherence and entanglement, in a more general scenario.

\section{Coherence measures}
We first briefly review the framework of coherence measures \cite{Baumgratz14}.
The following discussion is focused on a general $d$-dimensional Hilbert space, if not specified. For a classical computational basis $I= \{\ket{i}\}_{i=1,2,\dots,d}$, which is similar to the alive and dead basis of the cat, quantum coherence can be interpreted as the superposition strength on the classical states from set $I$.
For example, any state that can be represented by a diagonal state of $I$, that is,
\begin{equation}\label{Eq:sigma}
  \delta = \sum_{i=1}^d p_i\ket{i}\bra{i},
\end{equation}
has no superposition, and is thus called an incoherent (classical) state. We label the set of such state by $\mathcal{I}$. Conversely, a maximally coherent state is given by the maximal superposition state
\begin{equation}\label{Eq:Psid}
\ket{\Psi_d} = \frac{1}{\sqrt{d}}\sum_{i = 1}^d \ket{i},
\end{equation}
up to arbitrary relative phases between the components $\ket{i}$.

Similar to the definition of local operations and classical communication (LOCC) in entanglement \cite{Bennett96, Vedral98, Plbnio07}, the incoherent operations are defined by incoherent completely positive trace preserving (ICPTP) maps $\Phi_{\mathrm{ICPTP}}(\rho) = \sum_n K_n\rho K_n^\dag$, where the Kraus operators $\{K_n\}$ satisfy $\sum_n K_n K_n^\dag = I$ and  $K_n \mathcal{I} K_n^\dag \subset \mathcal{I}$. For the case, where post-selections are enabled, the output state corresponding to the $n$th Kraus operation is given by $  \rho_n = {K_n\rho K_n^\dag}/{p_n}$, where $p_n = \mathrm{Tr}\left[ K_n\rho K_n^\dag\right]$ is the probability of obtaining the outcome $n$.

The amount of coherence can be quantified in a manner similar to entanglement \cite{Bennett96, Vedral98, Plbnio07}.
Generally, a measure of coherence is a map $C$ from quantum state $\rho$ to a real non-negative number that satisfies the properties listed in Table~\ref{Fig:Properties}.
\begin{table}[htb]
\begin{framed}
\centering
\begin{enumerate}[(C1)]
\item
Coherence vanishes for all incoherent state. That is, $C(\delta) = 0$, for all $\delta\in \mathcal{I}$. A stronger requirement claims that (C1') $C(\delta) = 0$, iff $\delta\in \mathcal{I}$.
\item
\emph{Monotonicity}: coherence should not increase under incoherent operations. Thus, (C2a) $C(\rho) \geq C(\Phi_{\mathrm{ICPTP}}(\rho))$, and (C2b) $C(\rho) \geq \sum_n p_nC(\rho_n)$, where (C2b) is for the case where post-selection is enabled.
\item
\emph{Convexity}: coherence cannot increase under mixing states, $\sum_e p_eC(\rho_e) \geq C(\sum_e p_e\rho_e)$.
\end{enumerate}
\end{framed}
\caption{Properties that a coherence measure should satisfy.} \label{Fig:Properties}
\end{table}
Based on the distance measure, coherence can be quantified by the minimum distance from $\rho$ to all the incoherent states in $I$ \cite{Baumgratz14}. Two examples are, respectively, based on the relative entropy
\begin{equation}\label{Eq:Crelent}
C_{\mathrm{rel, ent}}(\rho) \equiv \min_{\delta\in I}{S(\rho||\delta)},
\end{equation}
and the $l_1$ matrix norm
\begin{equation}\label{Eq:l1norm}
C_{l_1}(\rho) \equiv \min_{\delta\in I}\parallel\rho - \delta\parallel_{l_1} = \sum_{i\neq j}|\bra{i}\rho\ket{j}|.
\end{equation}

\section{Intrinsic randomness}
In quantifying the intrinsic randomness of measurement, we restrict on projective measurements $P_{I} = \{P_i = \ket{i}\bra{i}\}$ in the same classical basis $I$ \footnote{The definitions of coherence and intrinsic randomness are based on a specific computational basis. In this perspective, the quantum feature can be quantified by the superposition strength on the measurement basis. Alternatively, we can define similar quantumness as the ability of measurements. For an arbitrary pure quantum state, if we can choose the measurement basis that is complementary to the state, a quantum feature similar to coherence can also be maximally revealed. The definitions of coherence based on the property of quantum state with a given measurement basis and the property of measurement is similar to the relationship between the pictures of Schrodinger and Heisenberg. The current definition of coherence thus follows from the routine of the Schrodinger¡¯s picture.}. Here, we define intrinsic randomness as the random outcomes that can not be predicted.

For example, when measuring a pure state $\rho = \ket{\psi}\bra{\psi}$, where $\ket{\psi} = \sum_ia_i\ket{i}$,  the measurement outcomes are truly random according to Born's rule.  Let $p_i = \mathrm{Tr}[P_i\rho] = |a_i|^2$ be the probability of obtaining the $i$th outcome, the randomness of the output random variable $A$ can be quantified by
  $R_I(\ket{\psi}\bra{\psi}) = H(A) \equiv-\sum_ip_i\log_2 p_i$,
where $H$ is the Shannon entropy function on the probability distribution $\{p_i\}$. Define $\rho^{\mathrm{diag}}$ to be the density matrix that has only diagonal terms of $\rho$ in the computational basis $I$. We can rewrite $R_I(\ket{\psi}\bra{\psi})$ as
\begin{equation}\label{Eq:puredef}
  R_I(\ket{\psi}\bra{\psi}) = S(\rho^{\mathrm{diag}})
\end{equation}
where $S$ is the von Neumann entropy function.
Suppose that the projective measurement is performed on $N$ copies of  $\ket{\psi}$, it is evident that the $N$ outcomes are independent and identically distributed (i.i.d.) random variables. With the Shannon source coding theorem \cite{shannon2001mathematical}, these random outcomes can be compressed into about $NH(A)$ bits, thus intuitively explaining why $H(A)$ quantifies the average randomness of the measurement outcome. We emphasize that our results can also be derived with other entropy functions \cite{rrnyi1961measures}, such as min-entropy, which is also widely used to quantify randomness. Here, we only consider the case where the measurement outcomes are i.i.d.~and leave the general case in Appendix \ref{Appendix:general}.

For a general mixed quantum state $\rho$, one might naively quantify the randomness in a similar manner to the pure state case.
Clearly, this definition overestimates the intrinsic randomness. For instance, consider a maximally entangled bipartite state $\ket{\psi^{AE}}= \left(\ket{00}+\ket{11}\right)/2$ shared by Alice and Eve. Alice performs projection measurements on her quantum states to gain random numbers. Suppose that the measurement basis is $I = \{\ket{0},\ket{1}\}$, Alice's outputs look random, but they can always be predicted by Eve, who simply measures her qubits on the same basis. Equivalently, the system $E$ can be regarded as a hidden variable that determines the state of Alice with certainty. Therefore, we should not recognize this type of randomness as being intrinsic randomness.

\begin{figure}[hbt]
\centering \resizebox{6cm}{!}{\includegraphics{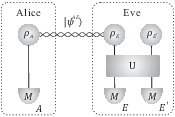}}
\caption{Alice performs projection measurement in the $I$ basis on a quantum state $\rho_A$, which could possibly be entangled with $\rho_E$. 
} \label{Fig:Channel}
\end{figure}

Instead, we consider a purified state, $\ket{\psi^{AE}}$, that is shared by Alice and an adversary, Eve, who attempts to predict the outputs of Alice's measurement as shown in Fig.~\ref{Fig:Channel}.
The intrinsic randomness quantifies the randomness of Alice's measurement outcomes $A$, conditioned on Eve's predictions $E$ and $E'$.
As the operations of Alice and Eve commute with each other, we can, without loss of generality, imagine that Eve performs her measurement first. For simplicity, suppose that Eve performs the projection measurement $\left\{\ket{\psi_e^E}\bra{\psi_e^E}\right\}$ on her state. When Eve obtains an outcome $e$ with probability $p_e$, the state of Alice is $\ket{\psi_e^A} = \langle{\psi_e^E}\ket{\psi^{AE}}$. As we already know, the measurement randomness that Alice can generate on $\ket{\psi_e^A}$ is given by $R_I(\ket{\psi_e^A})$. The total randomness can be quantified by $\sum p_e R_I\left(\ket{\psi_e^A}\right)$, where $\rho_A = \sum_ep_j\ket{\psi_e^A}\bra{\psi_e^A}$. As Eve could choose her measurement basis to maximize the probability of guessing Alice's measurement outcome, the intrinsic randomness that Alice can generate should take the minimum of all possible decompositions of $\rho_A$, that is,
\begin{equation}\label{Eq:generaldef}
  R_I\left(\rho\right) = \min_{\{p_e,\ket{\psi_e}\}}\sum_e p_e R_I(\ket{\psi_e}),
\end{equation}
where $\rho = \sum_e p_e \ket{\psi_e}\bra{\psi_e}$ and $\sum_{e}p_e= 1$. Notice that, as the minimization runs over all possible decomposition, the definition $R_I\left(\rho\right)$ does not depend on the purification.

For the case when Eve performs general positive-operator valued measures (POVMs), we can first `purify' the measurement and consider projection measurement on a quantum state in a larger Hilbert space as $\ket{\psi^{AEE'}}$, where Alice has $A$ and Eve has $EE'$.
Therefore a similar proof for POVMs follows.

\section{Verifying the properties of $R$}
Now we show that the intrinsic randomness $R_I$, defined in Eq.~\eqref{Eq:generaldef}, satisfies the properties of coherence measure listed in Table~\ref{Fig:Properties}. That is, the requirements of the measures for quantum coherence and intrinsic randomness are equivalent.

In the language of generating randomness, the requirement (C1) in Table~\ref{Fig:Properties} can be interpreted as saying classical states generate no randomness. This is because an incoherent state $\delta$, defined in Eq.~\eqref{Eq:sigma}, can be understood as a statistical mixture of classical states. We can easily verify that $R_I(\delta) = 0$, since $R_I(\delta) \leq  \sum_{i=1}^d p_iR_I(\ket{i}\bra{i}) = 0$ from Eq.~\eqref{Eq:sigma} and $R_I(\rho) \geq 0$ by definition. The stronger requirement (C1') implies that any non-classical states, which cannot be represented in the form of Eq.~\eqref{Eq:sigma}, could always be used to generate intrinsic randomness. Thus, this result answers why nonzero intrinsic randomness always indicates `quantumness' as discussed above. The proof for (C1') is provided in Appendix \ref{Appendix:proof}. We can also show that the upper bound of its intrinsic randomness is given by $R_I\left(\rho\right) \leq \log_2d$. The maximally coherent state $\ket{\Psi_d}$, defined in Eq.~\eqref{Eq:Psid}, has the largest intrinsic randomness.

The requirement (C2) implies a monotonicity property of incoherent operations. In the corresponding randomness picture, incoherent operations can be understood as classical operations that map one zero intrinsic randomness (classical) state to another one. An interpretation of (C2a) is that such classical operations should not increase randomness of a given state. While (C2b) requires that the randomness cannot increase on average when probabilistic strategies are considered. Let us quickly check why (C2b) is true for the pure state case, while leaving the proof for other cases in Appendix \ref{Appendix:proof}. For a pure state $\rho$, the randomness measure $R_I(\rho)$ equals the relative entropy of coherence $C_{\mathrm{rel, ent}}(\rho)$, whose monotonicity has been proved \cite{Baumgratz14}.

The convexity property (C3) can be understood as a requirement on the randomness generation process. In other words, the randomness cannot increase on average by statistically mixing several states. With the convex roof definition of $R_I(\rho)$, given in Eq.~\eqref{Eq:generaldef}, we can easily verify the convexity property (C3). The proof follows directly by considering a specific decomposition of $\rho = \sum_n p_n\rho_n$ in (C3).
Note that, the property (C2a) can be derived when (C2b) and (C3) are fulfilled, thus we also prove (C2a) for $R_I(\rho)$.

In summary, we prove that the intrinsic randomness $R_I(\rho)$ indeed measures the strength of coherence. A state with stronger coherence would therefore indicate larger randomness in measurement outcomes, and vice versa.

\section{Randomness distillation}
As mentioned above, when Alice performs a projective measurement $P_I$ on $N$ identical pure states $\ket{\psi} = \sum_i a_i\ket{i}$, she will obtain $N$ i.i.d.~random variables $A_1, A_2, \dots, A_N$. For the state $\ket{\psi}$ that is not maximally coherent, the randomness of the measurement outcomes is biased. Then, as shown in Fig.~\ref{Fig:Extractor}(a), Alice can perform a randomness extraction process to transform the $N$ biased random numbers to $l \approx NR_I(\ket{\psi})$ almost uniformly distributed random bits.

\begin{figure}[hbt]
\centering \resizebox{8cm}{!}{\includegraphics{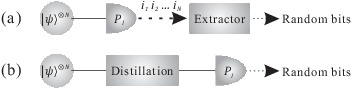}}
\caption{Random number extraction and coherence distillation. The randomness extraction process can be replicated by first distilling the coherence of the quantum state. Measurement outcomes will directly produce uniformly random bits.} \label{Fig:Extractor}
\end{figure}

We show in Fig.~\ref{Fig:Extractor}(b) that extraction can be equivalently performed before measurement. Now, extraction becomes a quantum procedure, which we call \emph{quantum extraction}. Considering the equivalence between intrinsic randomness and quantum coherence, quantum extraction can be regarded as a procedure of \emph{coherence distillation}. This concept resembles the distillation procedure of another (more popular) quantumness measure---entanglement \cite{Bennett96}.

With quantum extraction, we can first distill the input state $\ket{\psi} = \sum_i a_i\ket{i}$ into the maximally coherent state $\ket{\Psi_2} = (\ket{0} + \ket{1})/\sqrt{2}$. Then, we can directly obtain uniformly distributed random bits by measuring the maximally coherent state. For $N$ copies of $\ket{\psi}$, it is shown in Appendix \ref{App:distill} that we can asymptotically obtain $l$ copies of $\ket{\Psi_2}$, where $l$ and $N$ satisfy the following condition,
\begin{equation}\label{Eq:CohDistill}
{l}/{N} \approx R_I\left(\ket{\psi}\right).
\end{equation}
Taking a pure qubit input state as an example, the distillation procedure is summarized as follows.
\begin{enumerate}
\item
Prepare $N$ copies of qubit state $\ket{\psi}^{\otimes N} = \left(\alpha \ket{0} + \beta \ket{1}\right)^{\otimes N}$, which can be binomially expanded on the computational basis. There are $N+1$ distinct coefficients, $\beta^N, \alpha^{1}\beta^{N-1}, \dots, \alpha^N$, corresponding to different subspaces that have the same number of $\ket{0}$ or $\ket{1}$.

\item
Perform a projection measurement to distinguish between those subspaces. For the $k$th subspace, which has coefficient $\alpha^{N-k}\beta^{k}$, the measurement probability is given by $p_k = {N\choose k}|\alpha|^{2(N-k)}|\beta|^{2k}$. The resulting quantum state of the $k$th outcome corresponds to a maximally coherent state $\ket{\Psi_{D_k}}$  of dimension $D_k = {N\choose k}$.

\item
Suppose that $2^r\leq D_k< 2^{r+1}$, then we can directly project onto the $2^r$ subspace and convert to $r$ copies of $\ket{\Psi_2}$ as desired.
\end{enumerate}

To see why $r/N$ equals the randomness of $\ket{\psi}$ on average, we only need to take account of the operations that cause a loss of coherence. As shown in Appendix \ref{App:distill}, the only two projection measurements lose negligible amount of coherence, thus we asymptotically have $NR_I(\ket{\psi}) \approx r$.

In Appendix \ref{Appendix:general}, we further extend the definition of distillable coherence to mixed quantum states. Compared to the definition of the regularized entanglement of formation \cite{hayden2001asymptotic}, we also define coherence of formation and conjecture that it equals the regularized intrinsic randomness measure,
\begin{equation}\label{Eq:RNrho}
  R_I^\infty(\rho) = \lim_{N\rightarrow\infty}\frac{ R_I\left(\rho^{\otimes N}\right)}{N}.
\end{equation}

\section{Qubit example}
Here, we give an example of the calculation of $R_I(\rho)$ for a qubit state $\rho$. We follow a method of deriving the EOF \cite{Bennett96, Hill97,Wootters98} and refer to Appendix \ref{app:qubit} for details. Denote the Pauli matrices by $\sigma_i, \sigma_x, \sigma_y$, and $\sigma_z$.  When measured in the $\sigma_z$ basis, the randomness $R_{z}\left(\rho\right)$ can be calculated by
\begin{equation}\label{Eq:ef}
R_{z}(\rho)= H\left(\frac{1+\sqrt{1-C_z^2}}{2}\right).
\end{equation}
Here, the  $C_z$ term is defined as $ C_z =  |\sqrt{\eta_1}-\sqrt{\eta_2}|$, which resembles the concurrence \cite{Hill97}, where $\eta_1$ and $\eta_2$ are the eigenvalues of the matrix $M = \rho \sigma_x \rho^* \sigma_x$. In the Bloch sphere representation, the value of $C_z$ of a quantum state $\rho = (\sigma_i + n_x \sigma_x + n_y\sigma_y +n_z \sigma_z )/2$ can be calculated by
  $C_z = \sqrt{n_x^2 + n_y^2}$.
Compared with the $l_1$ norm coherence measure $C_{l_1}$ defined in Eq.~\eqref{Eq:l1norm},  we can easily check that $C_{l_1}(\rho)$ equals the coherence concurrence $C_z$ for the qubit case. We conjecture that the coherence concurrence can be generalized to an arbitrary high-dimensional space by following a similar method to that used for the entanglement concurrence \cite{Rungta01, Audenaert01, badziag2002concurrence}.

\section{Discussion}
As shown in Table~\ref{coherenceEntanglement}, there exist strong similarities between the frameworks of coherence and entanglement (see also Ref.~\cite{Streltsov15}), our study can be regarded as an extension of the convex roof measure from entanglement to coherence. Similar to the case of EOF, as a convex roof measure for coherence, we expect our proposed measure to play an important role in the research of coherence.

\begin{table*}[hbt]\centering
\caption{Comparing  the frameworks of coherence and entanglement. DI: device-independent; MDI: measurement-device-independent; QKD: quantum key distribution; QRNG: quantum random number generation.}\label{ps}
\begin{tabular}{ccc}
\hline
Properties & Coherence & Entanglement \\
\hline
Classical operation & Inherent operation \cite{Baumgratz14} & LOCC \cite{Bennett96}\\
Classical state & Incoherent state, Eq.~\eqref{Eq:sigma} & Separable state\\
Distance measure & $C_{\mathrm{rel, ent}}(\rho)$, Eq.~\eqref{Eq:Crelent} & Relative entropy distance \cite{Vedral98}\\
Convex roof measure & $R_I(\rho)$, Eq.~\eqref{Eq:generaldef} & EOF \cite{Bennett96, Hill97, Wootters98}\\
Distillation & Coherence distillation (Appendix \ref{App:distill}) & Entanglement distillation \cite{Rains98, Horodecki09}\\
Formation (cost) & Coherence formation & Entanglement cost \cite{hayden2001asymptotic, Horodecki09}\\
Foundation tests & Further research direction & Nonlocality tests \cite{bell, CHSH}  \\
Interconvertibility &  \cite{Du15a, Du15b} & Deterministic \cite{Nielsen99}, stochastic \cite{Dur00}\\
Catalysis effect & Further research direction & Entanglement catalysis \cite{Jonathan99, Eisert00}\\
Witness & Further research direction & Entanglement witness (EW) \\
DI applications & Further research direction & DIQKD \cite{Mayers98, Acin}, DIQRNG \cite{vazirani2011certifiable}\\
MDI applications & Further research direction & MDIQKD \cite{Lo12, Braunstein12}, MDIEW \cite{Branciard13, Xu14}\\
\hline
\end{tabular}\label{coherenceEntanglement}
\end{table*}

For further research directions, it is interesting to extend the framework of entanglement to coherence. An incomplete list of comparison between the two are shown in Table~\ref{coherenceEntanglement}. For instance, it is interesting to see whether $C_{\mathrm{rel, ent}}(\rho)$ and $R_I(\rho)$ are the unique lower and upper bounds of all coherence measures after regularization, and whether they can coincide.
Another interesting and related question is that of quantifying the coherence for an unknown quantum state, similar to the task of using an entanglement witness for quantification. The coherence measure $R_{I}(\rho)$ given in Eq.~\eqref{Eq:generaldef} ensures the true randomness when measuring a state $\rho$ in the $I$ basis. Such a technique can be utilized to construct a semi self-testing quantum random number generator. A straightforward way to do this is to first perform tomography on the to-be-measured state $\rho$ and then estimate the randomness of the $I$ basis measurement outcomes according to Eq.~\eqref{Eq:generaldef}. As the coherence measure $R_{I}(\rho)$ quantifies the output randomness in a measurement, our result can also be applied in other randomness generation scenarios \cite{colbeck2009quantum, gabriel2010generator, xu2012ultrafast, vazirani2011certifiable}.

\acknowledgments
The author acknowledges insightful discussions with H.-K.~Lo. This work was supported by the National Basic Research Program of China Grants No.~2011CBA00300 and No.~2011CBA00301, and the 1000 Youth Fellowship program in China.

\appendix
\section{Verifying the properties of $R$}\label{Appendix:proof}
The requirements of coherence measures are listed in Table~\ref{Fig:Properties}. The intrinsic randomness measure is defined by
\begin{equation}\label{Eq:generaldef}
  R_I\left(\rho\right) = \min_{\{p_e,\ket{\psi_e}\}}\sum_e p_e R_I(\ket{\psi_e}),
\end{equation}
where the minimum runs over all possible decompositions of $\rho$, $\rho = \sum_e p_e \ket{\psi_e}\bra{\psi_e}$ and $\sum_{e}p_e= 1$.

In this section, we will show that the intrinsic randomness measure $R_I\left(\rho\right)$ satisfies the requirements of coherence measures. Here, we only show how to prove (C1') and (C2b), the proofs for the other requirements can be found in the main text.

\subsection{Proof of  (C1')}
To prove that $R_I(\rho)$ satisfies (C1'), consider a state $\rho\notin \mathcal{I}$ that has $R_I\left(\rho\right) = 0$. From the definition of $R_I$, there exists a decomposition $\rho = \sum_e p_e \ket{\psi_e}\bra{\psi_e}$ such that $R_I(\ket{\psi_e}\bra{\psi_e}) = 0$ for all $e$. Since any pure state with zero randomness is in the basis $I$, we have $\ket{\psi_e} = \ket{i_e} \in I$ and  $\rho = \sum_e p_e \ket{i_e}\bra{i_e}$, which belongs to the set $\mathcal{I}$. This leads to a contradiction.

\subsection{Proof of (C2b)}
As mentioned in the main text, the monotonicity requirement of (C2b) is satisfied for pure state,
\begin{equation}\label{Eq:Monotonicitypure}
  R_I\left(\ket{\psi}\right) \geq \sum_n p_nR_I\left(\ket{\psi_n}\right),
\end{equation}
where $\ket{\psi_n} = K_n\ket{\psi}/\sqrt{p_n}$, and $p_n = \mathrm{Tr}\left[K_n\ket{\psi}\bra{\psi}\right]$. This is because for   a pure state $\rho$,  the intrinsic randomness $R_I(\rho)$ equals the relative entropy coherence measure $C_{\mathrm{rel, ent}}(\rho)$ \cite{Baumgratz14}, whose monotonicity has already been proved.

For a general mixed state $\rho$, suppose that the optimal decomposition that achieves the minimum in Eq.~\eqref{Eq:generaldef} is given by $\rho = \sum_e p_e \ket{\psi_e}\bra{\psi_e}$. Then, we have
\begin{equation}\label{}
  R_I\left(\rho\right) = \sum_e p_e R_I(\ket{\psi_e})
\end{equation}
Now suppose that the incoherent operation defined in the main text is acted on $\rho$. What we need to prove is that
\begin{equation}\label{}
\begin{aligned}
  R_I\left(\rho\right) \geq \sum_n p_nR_I(\rho_n).
\end{aligned}
\end{equation}
where $  \rho_n = {K_n\rho K_n^\dag}/{p_n}$ and $p_n = \mathrm{Tr}\left[ K_n\rho K_n^\dag\right]$.
As $\rho = \sum_e p_e \ket{\psi_e}\bra{\psi_e}$, we have
\begin{equation}\label{}
\begin{aligned}
  \rho_n &= \frac{K_n\rho K_n^\dag}{p_n} \\
  &= \sum_e \frac{p_e}{p_n} {K_n\ket{\psi_e}\bra{\psi_e} K_n^\dag} \\
  &= \sum_e \frac{p_e}{p_n}p_{en}\rho_{en}
\end{aligned}
\end{equation}
where, we denote $p_{en} = \mathrm{Tr}[K_n\ket{\psi_e}\bra{\psi_e} K_n^\dag]$ and $\rho_{en} = {K_n\ket{\psi_e}\bra{\psi_e} K_n^\dag}/{p_{en}}$, and we have $p_n = \sum_e p_ep_{en}$. Then, we can finish the proof
\begin{equation}\label{}
\begin{aligned}
  R_I\left(\rho\right) &= \sum_e p_e R_I(\ket{\psi_e}) \\
  &\geq \sum_{e} p_e \sum_np_{en}R_I(\rho_{en})\\
   &= \sum_{n} p_n\sum_e\frac{p_e p_{xn}}{p_n}R_I(\rho_{en})\\
  &\geq \sum_{n} p_nR_I\left(\sum_e\frac{p_e p_{en}}{p_n}\rho_{en}\right)\\
   &= \sum_n p_nR_I(\rho_n),
    \end{aligned}
\end{equation}
where the first inequality is based on the conclusion for pure states in Eq.~\eqref{Eq:Monotonicitypure} and the last inequality is due to the convexity of $R_I$.

\section{Coherence distillation procedure}\label{App:distill}
A coherence distillation procedure refers to a series of incoherent operations by which a large number of identical partly coherent states can be transformed into a smaller number of maximally coherent states. In this section, we introduce a coherence distillation procedure for pure qubit states. With $N$ copies of states $\ket{\psi} = \alpha \ket{0} + \beta \ket{1}$, we show that we can asymptotically obtain $l$ copies of $\ket{\Psi_2} = (\ket{0} + \ket{1})/\sqrt{2}$, where $l$ and $N$ satisfy $l/N \approx R_I(\ket{\psi})$. The derivation method can be generalized  to an  arbitrary dimension.

\subsection{Qubit distillation}
First we prepare $MN$ copies of a partially coherent qubit state which will be uniformly divided into $M$ groups. The initial state of each group can be expressed according to
\begin{equation}
\ket{\psi}^{\otimes N}=\left(\alpha \ket{0}+\beta \ket{1}\right)^{\otimes N}.
\end{equation}
A binomial expansion on the computational basis contains $N+1$ distinct coefficients $\beta^N, \alpha^{1}\beta^{N-1}, \dots, \alpha^N$. Thus we can divide the original $2^N$-dimensional Hilbert space into $N+1$ subspaces according to the coefficients. For the $k$th coefficient $\alpha^{N-k}\beta^k$, the corresponding $k$th subspace is a  $D_k = C_N^k$ dimensional Hilbert space, whose basis are denoted by
\begin{equation}
\alpha^{N-k}\beta^k:\left\{\ket{e_1^k}, \ket{e_2^k}, \cdots , \ket{e_{D_k}^k}\right\}.
\end{equation}
When considering the computational basis, $\ket{e_i^k}$ $(i=, 1, 2, \cdots, D_k)$ is an $N$-qubit basis with $(N-k)$ $\ket{0}$s and $k$ $\ket{1}$s.

Next, we perform a projection measurement on $\ket{\psi}^{\otimes N}$ to the subspaces. In our case, the projection operator that maps onto the $k$th subspace is given by
\begin{equation}
P_k=\ket{e_1^k}\bra{e_1^k}+\ket{e_2^k}\bra{e_2^k}+\cdots+\ket{e_{D_k}^k}\bra{e_{D_k}^k}.
\end{equation}
The probability of obtaining the $k$th outcome is
\begin{equation}
p_k=C_N^k\left|\alpha\right|^{2N-2k}\left|\beta\right|^{2k}.
\end{equation}
Note that as the coefficients for the expansion are the same, the post-selection of the $k$th outcome corresponds to a maximally coherent state $\ket{\Psi_{D_k}}$  of dimension $D_k$.

If $D_k = 2^r$, we can directly convert to $r$ copies of $\ket{\Psi_2}$ as desired.
Or, we can repeat this process $M$ times, and take the tensor product of the post selected state to obtain a maximally coherent state of dimension $D$,
 \begin{equation}\label{}
      \ket{\Psi_{D}} = \ket{\Psi_{D_{k_1}}} \ket{\Psi_{D_{k_2}}} \dots\ket{\Psi_{D_{k_M}}},
\end{equation}
where $k_j$ is the outcome of the $j$th measurement, and the total dimension is $D = D_{k_1}D_{k_2}\cdots D_{k_M}$.
The total dimension $D$ will lie between $2^r$ and $2^r(1+\epsilon)$ $(0<\epsilon<1)$ for some power $r$. It can be proved \cite{Bennett96} that as $M$ increases, $\epsilon$ will asymptotically approach $0$.

Therefore, we can perform a second projection measurement to the $2^r$-dimensional Hilbert subspace and directly get obtain a final state
\begin{equation}
\ket{\Psi_2}^{\otimes r}=\left(\frac{1}{\sqrt{2}}\left(\ket{0}+\ket{1}\right)\right)^{\otimes r}
\end{equation}
Using the above procedure, $NM$ copies of a partly coherent qubit state $\alpha \ket{0}+ \beta \ket{1}$ have been distilled into $r$ copies of maximally coherent state.

In the following, we will show that all the operations of the distillation protocol are incoherent operations. In addition, we will show that the number of distilled maximally coherent state $r$ and the number of initial qubit $MN$ satisfy the relation $NMR_I(\ket{\psi}) \approx r$.
\subsection{Incoherent operations}
As the only operations are the two projective measurements, we only need to prove the following lemma.
\begin{lemma}
Suppose an $n$-dimensional Hilbert space has a complete basis $I_n = \{\ket{1}, \ket{2},\cdots, \ket{n}\}$. A projection measurement that divides $I_n$ into its complementary subsets is an incoherent operation on the basis of $I_n$.
\end{lemma}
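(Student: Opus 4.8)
The plan is to exhibit the Kraus operators of the measurement explicitly and check them against the two defining conditions of an ICPTP map given in the main text. A projection measurement that partitions the basis $I$ into complementary (disjoint, covering) subsets $S_1, S_2, \dots, S_m$ is realized by the subspace projectors $P_j = \sum_{i \in S_j}\ket{i}\bra{i}$, one for each block of the partition. These are the only operators involved, so it suffices to show that $\{P_j\}$ satisfies (i) the completeness relation $\sum_j P_j^\dag P_j = I$ and (ii) the incoherence-preserving property $P_j \mathcal{I} P_j^\dag \subset \mathcal{I}$ for every $j$.

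For the completeness relation I would use that each $P_j$ is a Hermitian projector, so that $P_j^\dag P_j = P_j$, together with the fact that the blocks $S_j$ are disjoint and exhaust the index set. Summing then gives $\sum_j P_j = \sum_{i=1}^n \ket{i}\bra{i} = I$, which establishes trace preservation. Since each $P_j$ is Hermitian, the normalization $\sum_j K_n K_n^\dag = I$ stated in the main text holds by the very same computation, so the convention causes no difficulty.

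For the incoherence property I would take an arbitrary incoherent state $\delta = \sum_i p_i \ket{i}\bra{i} \in \mathcal{I}$ and compute $P_j \delta P_j^\dag$ directly. Because $\braket{i}{k} = \delta_{ik}$, the orthogonality of the basis collapses the double sum to $P_j \delta P_j = \sum_{i \in S_j} p_i \ket{i}\bra{i}$, which is again diagonal in $I$ and hence lies in $\mathcal{I}$. This confirms that each block projector maps incoherent states to (subnormalized) incoherent states, completing the verification.

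I do not expect a genuine obstacle here; the statement reduces to a one-line computation once the Kraus operators are named. The only point requiring care is the interpretation of \emph{complementary subsets} as a genuine partition of the index set: disjointness is what makes the projectors orthogonal, so that no cross terms survive in the completeness sum, while the covering property is what makes them sum to the full identity. Both ingredients are needed for the measurement to be simultaneously incoherent and trace preserving.
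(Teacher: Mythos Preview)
Your proposal is correct and follows essentially the same route as the paper: exhibit the block projectors $P_j=\sum_{i\in S_j}\ket{i}\bra{i}$ as the Kraus operators, use the partition property to get $\sum_j P_j = I$, and verify $P_j\delta P_j^\dag=\sum_{i\in S_j}p_i\ket{i}\bra{i}\in\mathcal{I}$ for any diagonal $\delta$. The paper records the intermediate identity $P_\alpha\ket{i}=\delta(\ket{i}\in I_\alpha)\ket{i}$ before computing $P_\alpha\delta P_\alpha^\dag$, but this is exactly the orthogonality collapse you invoke.
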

\begin{proof}
Suppose that the basis $I_n$ is divided into $m$ complementary subsets $I_{n_1}, I_{n_2}, \cdots, I_{n_m}$, such that $I_{n_\alpha}\cap I_{n_\beta}= \emptyset$, for all $\alpha \neq \beta \in \{1,2,\cdots,m\}$, and $I_n = I_{n_1}\cup I_{n_2}\cup \cdots\cup I_{n_m}$. Denote the projector that projects onto the $I_{n_\alpha}$ subspace by $P_\alpha$. Thus, we can show that the projection measurement is a set of Kraus operators $\{\hat{P}_\alpha\}$ that satisfy $\hat{P}_\alpha^\dag\hat{P}_\beta = \delta_{\alpha,\beta}\hat{P}_\alpha$ and $\sum_\alpha P_\alpha = I_n$.
To prove the projection measurement to be an incoherent operation, we additionally need to show that $\hat{P}_\alpha\mathcal{I}_n \hat{P}_\alpha^\dag \subset \mathcal{I}_n$, where $\mathcal{I}_n$ is the set of all incoherent states that can be represented by $\delta=\sum_{i=1}^n \delta_i \ket{i}\bra{i}$. As the definition of $P_\alpha$, we have
\begin{equation}\label{}
\begin{aligned}
  P_\alpha \ket{i} &= \delta(\ket{i}\in I_{n_\alpha})\ket{i},
\end{aligned}
\end{equation}
where $\delta(\ket{i}\in I_{n_\alpha})=1$ if $\ket{i} \in I_{n_\alpha}$ and $\delta(\ket{i}\in I_{n_\alpha})=0$ otherwise.
Thus, we can show that for an arbitrary state $\delta=\sum_{i=1}^d \delta_i \ket{i}\bra{i}\in \mathcal{I}_n$, we have
\begin{equation}
\begin{aligned}
\hat{P}_\alpha \delta \hat{P}_\alpha^\dag &=\hat{P}_\alpha\sum_{i=1}^n \delta_i \ket{a_i}\bra{a_i} \hat{P}_\alpha^\dag \\
&= \sum_{i=1}^n \delta_i\delta(\ket{i}\in I_{n_\alpha}) \ket{a_i}\bra{a_i}\in \mathcal{I}_n.
\end{aligned}
\end{equation}
\end{proof}
Therefore, we have proven that the operations in the distillation protocol are incoherent.

\subsection{Coherence loss}
To explain why we have $NMR_I(\ket{\psi}) \approx r$, we only need to consider the coherence loss during the distillation process.
The initial state in each group can be rewrite as
\begin{equation}
\ket{\psi}^{\otimes N}=\sum_{k=0}^{N} \sqrt{C_N^k} \alpha^{N-k}\beta^k \ket{\Psi_{D_k}},
\end{equation}
where $\ket{\Psi_{D_k}}$ is a maximally coherent state of dimension $D_k$.
Thus the density matrix of the initial state is
\begin{equation}
\rho=\sum_{k,k'} \sqrt{C_N^k} \sqrt{C_N^{k'}}\alpha^{N-k}\beta^k (\alpha^*)^{N-k'}(\beta^*)^{k'} \ket{\Psi_k} \bra{\Psi_{k'}}
\end{equation}

Because the coherence of $\rho$ is defined by the von Neumann entropy of its diagonal terms, we first look at $\rho^{diag}$. That is,
\begin{equation}\label{1}
\begin{aligned}
\rho^{diag}=&\sum_{i=1}^{2^N}\bra{e_i}\rho\ket{e_i}\ket{e_i}\bra{e_i}\\
=&\sum_{i=1}^{2^N}\sum_{k,k'}\sqrt{C_N^k} \sqrt{C_N^{k'}}\alpha^{N-k}\beta^k (\alpha^*)^{N-k'}(\beta^*)^{k'} \\
&\bra{e_i}{\Psi_k}\rangle \bra{\Psi_{k'}}{e_i}\rangle\ket{e_i}\bra{e_i}.
\end{aligned}
\end{equation}
Here, we can see that when $k\neq k'$, $\bra{e_i}{\Psi_k}\rangle \bra{\Psi_{k'}}{e_i}\rangle=0$. Therefore Eq.~\eqref{1} can be simplified as
\begin{equation}
\rho^{diag}=\sum_{k=0}^{N} C_N^k\left|\alpha\right|^{2N-2k}\left|\beta\right|^{2k}\left(\ket{e_i}\bra{e_i}\right)^{diag}= \sum_{k=0}^{N} p_k \rho_k^{diag}.
\end{equation}
Here, $\rho^{\mathrm{diag}}$ has the decomposition $\{p_k, \rho_k^{diag}\}$. Thus, we have
\begin{equation}\label{2}
S(\rho^{diag})= H(p_k)+ \sum_{k=0}^{N}p_k S(\rho_k^{diag}),
\end{equation}
where $S(\rho^{diag})$ is the von Neumann entropy of $\rho^{diag}$ and $H(p_k)$ is the Shannon entropy. Considering our coherence (intrinsic randomness) definition, Eq.~\eqref{2} is equivalent to
\begin{equation}
C(\rho)=H(p_k)+\sum_{k=0}^{N} p_kC(\rho_k),
\end{equation}
where $C(\rho)$ is the average initial coherence and $\sum_{k=0}^{N} p_kC(\rho_k)$ is the average coherence left after the first projection measurement.
Therefore, the coherence loss in the first operation is
\begin{equation}
H(p_k)=-\sum_{k=0}^{N}p_k \log_2{(p_k)}\leq \log_2{N}.
\end{equation}
The coherence loss for the second projection measurement can be easily estimated by $\log_2(1+\epsilon)\approx \epsilon$. Thus the total coherence loss has an upper bound given by
\begin{equation}
M\log_2N+\log_2(1+\epsilon),
\end{equation}
which is negligible relative to the initial coherence $MNC(\ket{\psi})$ when $M$ and $N$ are large.

\section{Qubit example}\label{app:qubit}
Here, we derive the intrinsic randomness formula of the qubit state. We denote the Pauli matrices by $\sigma_i, \sigma_x, \sigma_y, \sigma_z$. When measured in the $\sigma_z$ basis, the intrinsic randomness for a pure qubit state $\ket{\psi} = \alpha \ket{0}+ \beta \ket{1}$ is given by
\begin{equation}\label{}
  R_z(\ket{\psi}) = H(|\alpha|^2) = H(|\beta|^2),
\end{equation}
where $H(p) = p\log p + (1-p)\log(1-p)$. If we define $n_x = \bra{\psi}\sigma_x\ket{\psi} = \alpha^*\beta + \alpha\beta^*$ and $n_y = \bra{\psi}\sigma_y\ket{\psi} = -i\alpha^*\beta + i\alpha\beta^*$, then it is easy to check that
\begin{equation}\label{Eq:RZ}
R_{z}(\ket{\psi})= H\left(\frac{1+\sqrt{1-n_x^2 - n_y^2}}{2}\right).
\end{equation}

For a general mixed state $\rho$, we can follow the  method for  deriving the entanglement of formation \cite{Wootters98}. In this case, we need to first define $\ket{\tilde{\psi}} = \sigma_x\ket{\psi^*} = \beta^*\ket{0}+  \alpha^*\ket{1}$, and the coherent concurrence by
\begin{equation}\label{}
  C_z(\ket{\psi}) = |\langle\psi|\tilde{\psi}\rangle| = 2|\alpha\beta|.
\end{equation}
Then it is easy to check that
\begin{equation}\label{Eq:RZ}
R_{z}(\ket{\psi})= H\left(\frac{1+\sqrt{1-C_z^2}}{2}\right).
\end{equation}
The randomness $R_{I}\left(\rho\right)$ can be obtained according to Eq.~\eqref{Eq:RZ} by first calculating the coherent concurrence. Follow the method of deriving the entanglement of formation, the  $C_z$ value can be obtained by $ C_z =  |\sqrt{\eta_1}-\sqrt{\eta_2}|$, where $\eta_1$ and $\eta_2$ are the eigenvalues of the matrix $M = \rho \sigma_x \rho^* \sigma_x$. In the Bloch sphere representation, the value of $C_z$ of a quantum state $\rho = (\sigma_i + n_x \sigma_x + n_y\sigma_y +n_z \sigma_z )/2$ can be calculated by
\begin{equation}\label{Eq:concurrence}
  C_z = \sqrt{n_x^2 + n_y^2}.
\end{equation}

Compared to the $l_1$ norm coherence measure $C_{l_1}$ \cite{Baumgratz14}, which is defined by the sum of the off-diagonal elements
\begin{equation}\label{}
C_{l_1}(\rho) = \sum_{i\neq j}|\rho_{ij}|,
\end{equation}
one can easily check that $C_{l_1}(\rho)$ equals the concurrence $C_z$ for the qubit case. This is because
\begin{equation}\label{}
\begin{aligned}
C_{l_1}(\rho) &= |\bra{0}\rho\ket{1}| + |\bra{1}\rho\ket{0}| \\
&= |\frac{1}{2}(n_x - in_y)| + |\frac{1}{2}(n_x + in_y)|\\
&= \sqrt{n_x^2 + n_y^2}.
\end{aligned}
\end{equation}

\section{General definitions for coherence measure} \label{Appendix:general}
Generally, when considering the intrinsic randomness of multiple copies of $\rho$, we can define the average intrinsic randomness in a manner similar  to the definition of entanglement cost \cite{hayden2001asymptotic, Plbnio07, Horodecki09} by
\begin{equation}\label{Eq:Nrho}
  R_I^C(\rho) = \inf\left\{r:\lim_{N\rightarrow\infty}\left[\inf_{\Phi_{\mathrm{ICPTP}}}D\left(\rho^{\otimes N}, \Phi_{\mathrm{ICPTP}}\left(\ket{\Psi_{2^{rN}}}\right)\right)\right]=0\right\},
\end{equation}
where $D(\rho_1,\rho_2)$ is a suitable measure of distance, which, for instance, could be the trace norm. In this case, the intrinsic randomness is understood as the average coherence cost in preparing $\rho$. Compared to the definition of the regularized entanglement of formation \cite{hayden2001asymptotic}, we conjecture that  $R_I^C(\rho)$ equals the regularized intrinsic randomness measure,
\begin{equation}\label{Eq:RNrho}
  R_I^\infty(\rho) = \lim_{N\rightarrow\infty}\frac{ R_I\left(\rho^{\otimes N}\right)}{N}.
\end{equation}

In the other direction, we can apply intrinsic operations to transform $N$ non-maximally coherent copies of $\rho$ to $l$ maximally coherent state $\ket{\Psi_2}$.
Similarly, we can define the distillable coherence by the supremum of $l/N$ over all possible distillation protocols \cite{Rains98, Plbnio07, Horodecki09},
\begin{equation}\label{Eq:Nrho}
  R_I^D(\rho) = \sup\left\{l:\lim_{N\rightarrow\infty}\left[\inf_{\Phi_{\mathrm{ICPTP}}}D\left(\Phi_{\mathrm{ICPTP}}\left(\rho^{\otimes N}\right)-\ket{\Psi_{2^{lN}}}\right)\right]=0\right\}.
\end{equation}
This distillable coherence $R_I^D(\rho)$ can thus be considered as the amount of intrinsic randomness when a quantum extractor is performed before measurement, as shown in the main text. For a general reasonable regularized coherence measure $C_I^\infty(\rho)$ similar to Eq.~\eqref{Eq:RNrho}, we conjecture that the two measures $R_I^D$ and $R_I^C$ are equivalent for all possible distance measures. They serves as two extremal measures, such that $R^D(\rho) \le C^\infty(\rho) \le R^C(\rho)$ for all regularized coherence measures $C^\infty(\rho)$.


\bibliographystyle{apsrev4-1}

\bibliography{QR}

\begin{thebibliography}{45}%
\makeatletter
\providecommand \@ifxundefined [1]{%
 \@ifx{#1\undefined}
}%
\providecommand \@ifnum [1]{%
 \ifnum #1\expandafter \@firstoftwo
 \else \expandafter \@secondoftwo
 \fi
}%
\providecommand \@ifx [1]{%
 \ifx #1\expandafter \@firstoftwo
 \else \expandafter \@secondoftwo
 \fi
}%
\providecommand \natexlab [1]{#1}%
\providecommand \enquote  [1]{``#1''}%
\providecommand \bibnamefont  [1]{#1}%
\providecommand \bibfnamefont [1]{#1}%
\providecommand \citenamefont [1]{#1}%
\providecommand \href@noop [0]{\@secondoftwo}%
\providecommand \href [0]{\begingroup \@sanitize@url \@href}%
\providecommand \@href[1]{\@@startlink{#1}\@@href}%
\providecommand \@@href[1]{\endgroup#1\@@endlink}%
\providecommand \@sanitize@url [0]{\catcode `\\12\catcode `\$12\catcode
  `\&12\catcode `\#12\catcode `\^12\catcode `\_12\catcode `\%12\relax}%
\providecommand \@@startlink[1]{}%
\providecommand \@@endlink[0]{}%
\providecommand \url  [0]{\begingroup\@sanitize@url \@url }%
\providecommand \@url [1]{\endgroup\@href {#1}{\urlprefix }}%
\providecommand \urlprefix  [0]{URL }%
\providecommand \Eprint [0]{\href }%
\providecommand \doibase [0]{http://dx.doi.org/}%
\providecommand \selectlanguage [0]{\@gobble}%
\providecommand \bibinfo  [0]{\@secondoftwo}%
\providecommand \bibfield  [0]{\@secondoftwo}%
\providecommand \translation [1]{[#1]}%
\providecommand \BibitemOpen [0]{}%
\providecommand \bibitemStop [0]{}%
\providecommand \bibitemNoStop [0]{.\EOS\space}%
\providecommand \EOS [0]{\spacefactor3000\relax}%
\providecommand \BibitemShut  [1]{\csname bibitem#1\endcsname}%
\let\auto@bib@innerbib\@empty
\bibitem [{\citenamefont {Born}(1926)}]{born1926quantentheorie}%
  \BibitemOpen
  \bibfield  {author} {\bibinfo {author} {\bibfnamefont {M.}~\bibnamefont
  {Born}},\ }\href@noop {} {\bibfield  {journal} {\bibinfo  {journal}
  {Zeitschrift f{\"u}r Physik}\ }\textbf {\bibinfo {volume} {37}},\ \bibinfo
  {pages} {863} (\bibinfo {year} {1926})}\BibitemShut {NoStop}%
\bibitem [{\citenamefont {Bell}(1987)}]{bell}%
  \BibitemOpen
  \bibfield  {author} {\bibinfo {author} {\bibfnamefont {J.~S.}\ \bibnamefont
  {Bell}},\ }\href@noop {} {\emph {\bibinfo {title} {On the
  Einstein-Podolsky-Rosen Paradox. Physics 1, 195--200 (1964)}}},\ Speakable
  and Unspeakable in Quantum Mechanics\ (\bibinfo  {publisher} {Cambridge
  University Press},\ \bibinfo {year} {1987})\BibitemShut {NoStop}%
\bibitem [{\citenamefont {Giovannetti}\ \emph {et~al.}(2011)\citenamefont
  {Giovannetti}, \citenamefont {Lloyd},\ and\ \citenamefont
  {Maccone}}]{giovannetti2011advances}%
  \BibitemOpen
  \bibfield  {author} {\bibinfo {author} {\bibfnamefont {V.}~\bibnamefont
  {Giovannetti}}, \bibinfo {author} {\bibfnamefont {S.}~\bibnamefont {Lloyd}},
  \ and\ \bibinfo {author} {\bibfnamefont {L.}~\bibnamefont {Maccone}},\
  }\href@noop {} {\bibfield  {journal} {\bibinfo  {journal} {Nature Photonics}\
  }\textbf {\bibinfo {volume} {5}},\ \bibinfo {pages} {222} (\bibinfo {year}
  {2011})}\BibitemShut {NoStop}%
\bibitem [{\citenamefont {Lambert}\ \emph {et~al.}(2013)\citenamefont
  {Lambert}, \citenamefont {Chen}, \citenamefont {Cheng}, \citenamefont {Li},
  \citenamefont {Chen},\ and\ \citenamefont {Nori}}]{lambert2013quantum}%
  \BibitemOpen
  \bibfield  {author} {\bibinfo {author} {\bibfnamefont {N.}~\bibnamefont
  {Lambert}}, \bibinfo {author} {\bibfnamefont {Y.-N.}\ \bibnamefont {Chen}},
  \bibinfo {author} {\bibfnamefont {Y.-C.}\ \bibnamefont {Cheng}}, \bibinfo
  {author} {\bibfnamefont {C.-M.}\ \bibnamefont {Li}}, \bibinfo {author}
  {\bibfnamefont {G.-Y.}\ \bibnamefont {Chen}}, \ and\ \bibinfo {author}
  {\bibfnamefont {F.}~\bibnamefont {Nori}},\ }\href@noop {} {\bibfield
  {journal} {\bibinfo  {journal} {Nature Physics}\ }\textbf {\bibinfo {volume}
  {9}},\ \bibinfo {pages} {10} (\bibinfo {year} {2013})}\BibitemShut {NoStop}%
\bibitem [{\citenamefont {Glauber}(1963)}]{Glauber63}%
  \BibitemOpen
  \bibfield  {author} {\bibinfo {author} {\bibfnamefont {R.}~\bibnamefont
  {Glauber}},\ }\href {\doibase 10.1103/PhysRev.131.2766} {\bibfield  {journal}
  {\bibinfo  {journal} {Phys. Rev.}\ }\textbf {\bibinfo {volume} {131}},\
  \bibinfo {pages} {2766} (\bibinfo {year} {1963})}\BibitemShut {NoStop}%
\bibitem [{\citenamefont {Sudarshan}(1963)}]{Sudarshan}%
  \BibitemOpen
  \bibfield  {author} {\bibinfo {author} {\bibfnamefont {E.}~\bibnamefont
  {Sudarshan}},\ }\href {\doibase 10.1103/PhysRevLett.10.277} {\bibfield
  {journal} {\bibinfo  {journal} {Phys. Rev. Lett.}\ }\textbf {\bibinfo
  {volume} {10}},\ \bibinfo {pages} {277} (\bibinfo {year} {1963})}\BibitemShut
  {NoStop}%
\bibitem [{\citenamefont {Luo}(2005)}]{luo2005quantum}%
  \BibitemOpen
  \bibfield  {author} {\bibinfo {author} {\bibfnamefont {S.}~\bibnamefont
  {Luo}},\ }\href@noop {} {\bibfield  {journal} {\bibinfo  {journal}
  {Theoretical and mathematical physics}\ }\textbf {\bibinfo {volume} {143}},\
  \bibinfo {pages} {681} (\bibinfo {year} {2005})}\BibitemShut {NoStop}%
\bibitem [{\citenamefont {{\AA{}berg}}(2006)}]{Aberg06}%
  \BibitemOpen
  \bibfield  {author} {\bibinfo {author} {\bibfnamefont {J.}~\bibnamefont
  {{\AA{}berg}}},\ }\href@noop {} {\bibfield  {journal} {\bibinfo  {journal}
  {eprint arXiv:quant-ph/0612146}\ } (\bibinfo {year} {2006})},\ \Eprint
  {http://arxiv.org/abs/quant-ph/0612146} {quant-ph/0612146} \BibitemShut
  {NoStop}%
\bibitem [{\citenamefont {Monras}\ \emph {et~al.}(2014)\citenamefont {Monras},
  \citenamefont {Checi\'nska},\ and\ \citenamefont
  {Ekert}}]{monras2013witnessing}%
  \BibitemOpen
  \bibfield  {author} {\bibinfo {author} {\bibfnamefont {A.}~\bibnamefont
  {Monras}}, \bibinfo {author} {\bibfnamefont {A.}~\bibnamefont {Checi\'nska}},
  \ and\ \bibinfo {author} {\bibfnamefont {A.}~\bibnamefont {Ekert}},\
  }\href@noop {} {\bibfield  {journal} {\bibinfo  {journal} {New Journal of
  Physics}\ }\textbf {\bibinfo {volume} {16}},\ \bibinfo {pages} {063041}
  (\bibinfo {year} {2014})}\BibitemShut {NoStop}%
\bibitem [{\citenamefont {Baumgratz}\ \emph {et~al.}(2014)\citenamefont
  {Baumgratz}, \citenamefont {Cramer},\ and\ \citenamefont
  {Plenio}}]{Baumgratz14}%
  \BibitemOpen
  \bibfield  {author} {\bibinfo {author} {\bibfnamefont {T.}~\bibnamefont
  {Baumgratz}}, \bibinfo {author} {\bibfnamefont {M.}~\bibnamefont {Cramer}}, \
  and\ \bibinfo {author} {\bibfnamefont {M.~B.}\ \bibnamefont {Plenio}},\
  }\href {\doibase 10.1103/PhysRevLett.113.140401} {\bibfield  {journal}
  {\bibinfo  {journal} {Phys. Rev. Lett.}\ }\textbf {\bibinfo {volume} {113}},\
  \bibinfo {pages} {140401} (\bibinfo {year} {2014})}\BibitemShut {NoStop}%
\bibitem [{\citenamefont {\AA{}berg}(2014)}]{Aberg14}%
  \BibitemOpen
  \bibfield  {author} {\bibinfo {author} {\bibfnamefont {J.}~\bibnamefont
  {\AA{}berg}},\ }\href {\doibase 10.1103/PhysRevLett.113.150402} {\bibfield
  {journal} {\bibinfo  {journal} {Phys. Rev. Lett.}\ }\textbf {\bibinfo
  {volume} {113}},\ \bibinfo {pages} {150402} (\bibinfo {year}
  {2014})}\BibitemShut {NoStop}%
\bibitem [{\citenamefont {Girolami}(2014)}]{Girolami14}%
  \BibitemOpen
  \bibfield  {author} {\bibinfo {author} {\bibfnamefont {D.}~\bibnamefont
  {Girolami}},\ }\href {\doibase 10.1103/PhysRevLett.113.170401} {\bibfield
  {journal} {\bibinfo  {journal} {Phys. Rev. Lett.}\ }\textbf {\bibinfo
  {volume} {113}},\ \bibinfo {pages} {170401} (\bibinfo {year}
  {2014})}\BibitemShut {NoStop}%
\bibitem [{\citenamefont {Bennett}\ \emph {et~al.}(1996)\citenamefont
  {Bennett}, \citenamefont {Bernstein}, \citenamefont {Popescu},\ and\
  \citenamefont {Schumacher}}]{Bennett96}%
  \BibitemOpen
  \bibfield  {author} {\bibinfo {author} {\bibfnamefont {C.~H.}\ \bibnamefont
  {Bennett}}, \bibinfo {author} {\bibfnamefont {H.~J.}\ \bibnamefont
  {Bernstein}}, \bibinfo {author} {\bibfnamefont {S.}~\bibnamefont {Popescu}},
  \ and\ \bibinfo {author} {\bibfnamefont {B.}~\bibnamefont {Schumacher}},\
  }\href {\doibase 10.1103/PhysRevA.53.2046} {\bibfield  {journal} {\bibinfo
  {journal} {Phys. Rev. A}\ }\textbf {\bibinfo {volume} {53}},\ \bibinfo
  {pages} {2046} (\bibinfo {year} {1996})}\BibitemShut {NoStop}%
\bibitem [{\citenamefont {Vedral}\ and\ \citenamefont
  {Plenio}(1998)}]{Vedral98}%
  \BibitemOpen
  \bibfield  {author} {\bibinfo {author} {\bibfnamefont {V.}~\bibnamefont
  {Vedral}}\ and\ \bibinfo {author} {\bibfnamefont {M.~B.}\ \bibnamefont
  {Plenio}},\ }\href {\doibase 10.1103/PhysRevA.57.1619} {\bibfield  {journal}
  {\bibinfo  {journal} {Phys. Rev. A}\ }\textbf {\bibinfo {volume} {57}},\
  \bibinfo {pages} {1619} (\bibinfo {year} {1998})}\BibitemShut {NoStop}%
\bibitem [{\citenamefont {Plenio}\ and\ \citenamefont
  {Virmani}(2007)}]{Plbnio07}%
  \BibitemOpen
  \bibfield  {author} {\bibinfo {author} {\bibfnamefont {M.~B.}\ \bibnamefont
  {Plenio}}\ and\ \bibinfo {author} {\bibfnamefont {S.}~\bibnamefont
  {Virmani}},\ }\href {http://dl.acm.org/citation.cfm?id=2011706.2011707}
  {\bibfield  {journal} {\bibinfo  {journal} {Quantum Info. Comput.}\ }\textbf
  {\bibinfo {volume} {7}},\ \bibinfo {pages} {1} (\bibinfo {year}
  {2007})}\BibitemShut {NoStop}%
\bibitem [{\citenamefont {Horodecki}\ \emph {et~al.}(2009)\citenamefont
  {Horodecki}, \citenamefont {Horodecki}, \citenamefont {Horodecki},\ and\
  \citenamefont {Horodecki}}]{Horodecki09}%
  \BibitemOpen
  \bibfield  {author} {\bibinfo {author} {\bibfnamefont {R.}~\bibnamefont
  {Horodecki}}, \bibinfo {author} {\bibfnamefont {P.}~\bibnamefont
  {Horodecki}}, \bibinfo {author} {\bibfnamefont {M.}~\bibnamefont
  {Horodecki}}, \ and\ \bibinfo {author} {\bibfnamefont {K.}~\bibnamefont
  {Horodecki}},\ }\href {\doibase 10.1103/RevModPhys.81.865} {\bibfield
  {journal} {\bibinfo  {journal} {Rev. Mod. Phys.}\ }\textbf {\bibinfo {volume}
  {81}},\ \bibinfo {pages} {865} (\bibinfo {year} {2009})}\BibitemShut
  {NoStop}%
\bibitem [{\citenamefont {{Zurek}}(2009)}]{Zurek09}%
  \BibitemOpen
  \bibfield  {author} {\bibinfo {author} {\bibfnamefont {W.~H.}\ \bibnamefont
  {{Zurek}}},\ }\href {\doibase 10.1038/nphys1202} {\bibfield  {journal}
  {\bibinfo  {journal} {Nature Physics}\ }\textbf {\bibinfo {volume} {5}},\
  \bibinfo {pages} {181} (\bibinfo {year} {2009})},\ \Eprint
  {http://arxiv.org/abs/0903.5082} {arXiv:0903.5082 [quant-ph]} \BibitemShut
  {NoStop}%
\bibitem [{\citenamefont {Hill}\ and\ \citenamefont {Wootters}(1997)}]{Hill97}%
  \BibitemOpen
  \bibfield  {author} {\bibinfo {author} {\bibfnamefont {S.}~\bibnamefont
  {Hill}}\ and\ \bibinfo {author} {\bibfnamefont {W.~K.}\ \bibnamefont
  {Wootters}},\ }\href {\doibase 10.1103/PhysRevLett.78.5022} {\bibfield
  {journal} {\bibinfo  {journal} {Phys. Rev. Lett.}\ }\textbf {\bibinfo
  {volume} {78}},\ \bibinfo {pages} {5022} (\bibinfo {year}
  {1997})}\BibitemShut {NoStop}%
\bibitem [{\citenamefont {Wootters}(1998)}]{Wootters98}%
  \BibitemOpen
  \bibfield  {author} {\bibinfo {author} {\bibfnamefont {W.~K.}\ \bibnamefont
  {Wootters}},\ }\href {\doibase 10.1103/PhysRevLett.80.2245} {\bibfield
  {journal} {\bibinfo  {journal} {Phys. Rev. Lett.}\ }\textbf {\bibinfo
  {volume} {80}},\ \bibinfo {pages} {2245} (\bibinfo {year}
  {1998})}\BibitemShut {NoStop}%
\bibitem [{Note1()}]{Note1}%
  \BibitemOpen
  \bibinfo {note} {The definitions of coherence and intrinsic randomness are
  based on a specific computational basis. In this perspective, the quantum
  feature can be quantified by the superposition strength on the measurement
  basis. Alternatively, we can define similar quantumness as the ability of
  measurements. For an arbitrary pure quantum state, if we can choose the
  measurement basis that is complementary to the state, a quantum feature
  similar to coherence can also be maximally revealed. The definitions of
  coherence based on the property of quantum state with a given measurement
  basis and the property of measurement is similar to the relationship between
  the pictures of Schrodinger and Heisenberg. The current definition of
  coherence thus follows from the routine of the Schrodinger¡¯s
  picture.}\BibitemShut {Stop}%
\bibitem [{\citenamefont {Shannon}(1948)}]{shannon2001mathematical}%
  \BibitemOpen
  \bibfield  {author} {\bibinfo {author} {\bibfnamefont {C.}~\bibnamefont
  {Shannon}},\ }\href {\doibase 10.1002/j.1538-7305.1948.tb01338.x} {\bibfield
  {journal} {\bibinfo  {journal} {Bell System Technical Journal, The}\ }\textbf
  {\bibinfo {volume} {27}},\ \bibinfo {pages} {379} (\bibinfo {year}
  {1948})}\BibitemShut {NoStop}%
\bibitem [{\citenamefont {R$\mathrm{\acute{e}}$nyi}(1961)}]{rrnyi1961measures}%
  \BibitemOpen
  \bibfield  {author} {\bibinfo {author} {\bibfnamefont {A.}~\bibnamefont
  {R$\mathrm{\acute{e}}$nyi}},\ }in\ \href@noop {} {\emph {\bibinfo {booktitle}
  {Fourth Berkeley Symposium on Mathematical Statistics and Probability}}}\
  (\bibinfo {year} {1961})\ pp.\ \bibinfo {pages} {547--561}\BibitemShut
  {NoStop}%
\bibitem [{\citenamefont {Hayden}\ \emph {et~al.}(2001)\citenamefont {Hayden},
  \citenamefont {Horodecki},\ and\ \citenamefont
  {Terhal}}]{hayden2001asymptotic}%
  \BibitemOpen
  \bibfield  {author} {\bibinfo {author} {\bibfnamefont {P.~M.}\ \bibnamefont
  {Hayden}}, \bibinfo {author} {\bibfnamefont {M.}~\bibnamefont {Horodecki}}, \
  and\ \bibinfo {author} {\bibfnamefont {B.~M.}\ \bibnamefont {Terhal}},\
  }\href@noop {} {\bibfield  {journal} {\bibinfo  {journal} {Journal of Physics
  A: Mathematical and General}\ }\textbf {\bibinfo {volume} {34}},\ \bibinfo
  {pages} {6891} (\bibinfo {year} {2001})}\BibitemShut {NoStop}%
\bibitem [{\citenamefont {Rungta}\ \emph {et~al.}(2001)\citenamefont {Rungta},
  \citenamefont {Bu\ifmmode~\check{z}\else \v{z}\fi{}ek}, \citenamefont
  {Caves}, \citenamefont {Hillery},\ and\ \citenamefont {Milburn}}]{Rungta01}%
  \BibitemOpen
  \bibfield  {author} {\bibinfo {author} {\bibfnamefont {P.}~\bibnamefont
  {Rungta}}, \bibinfo {author} {\bibfnamefont {V.}~\bibnamefont
  {Bu\ifmmode~\check{z}\else \v{z}\fi{}ek}}, \bibinfo {author} {\bibfnamefont
  {C.~M.}\ \bibnamefont {Caves}}, \bibinfo {author} {\bibfnamefont
  {M.}~\bibnamefont {Hillery}}, \ and\ \bibinfo {author} {\bibfnamefont
  {G.~J.}\ \bibnamefont {Milburn}},\ }\href {\doibase
  10.1103/PhysRevA.64.042315} {\bibfield  {journal} {\bibinfo  {journal} {Phys.
  Rev. A}\ }\textbf {\bibinfo {volume} {64}},\ \bibinfo {pages} {042315}
  (\bibinfo {year} {2001})}\BibitemShut {NoStop}%
\bibitem [{\citenamefont {Audenaert}\ \emph {et~al.}(2001)\citenamefont
  {Audenaert}, \citenamefont {Verstraete},\ and\ \citenamefont
  {De~Moor}}]{Audenaert01}%
  \BibitemOpen
  \bibfield  {author} {\bibinfo {author} {\bibfnamefont {K.}~\bibnamefont
  {Audenaert}}, \bibinfo {author} {\bibfnamefont {F.}~\bibnamefont
  {Verstraete}}, \ and\ \bibinfo {author} {\bibfnamefont {B.}~\bibnamefont
  {De~Moor}},\ }\href {\doibase 10.1103/PhysRevA.64.052304} {\bibfield
  {journal} {\bibinfo  {journal} {Phys. Rev. A}\ }\textbf {\bibinfo {volume}
  {64}},\ \bibinfo {pages} {052304} (\bibinfo {year} {2001})}\BibitemShut
  {NoStop}%
\bibitem [{\citenamefont {Badziag}\ \emph {et~al.}(2002)\citenamefont
  {Badziag}, \citenamefont {Deuar}, \citenamefont {Horodecki}, \citenamefont
  {Horodecki},\ and\ \citenamefont {Horodecki}}]{badziag2002concurrence}%
  \BibitemOpen
  \bibfield  {author} {\bibinfo {author} {\bibfnamefont {P.}~\bibnamefont
  {Badziag}}, \bibinfo {author} {\bibfnamefont {P.}~\bibnamefont {Deuar}},
  \bibinfo {author} {\bibfnamefont {M.}~\bibnamefont {Horodecki}}, \bibinfo
  {author} {\bibfnamefont {P.}~\bibnamefont {Horodecki}}, \ and\ \bibinfo
  {author} {\bibfnamefont {R.}~\bibnamefont {Horodecki}},\ }\href@noop {}
  {\bibfield  {journal} {\bibinfo  {journal} {Journal of Modern Optics}\
  }\textbf {\bibinfo {volume} {49}},\ \bibinfo {pages} {1289} (\bibinfo {year}
  {2002})}\BibitemShut {NoStop}%
\bibitem [{\citenamefont {Streltsov}\ \emph {et~al.}(2015)\citenamefont
  {Streltsov}, \citenamefont {Singh}, \citenamefont {Dhar}, \citenamefont
  {Bera},\ and\ \citenamefont {Adesso}}]{Streltsov15}%
  \BibitemOpen
  \bibfield  {author} {\bibinfo {author} {\bibfnamefont {A.}~\bibnamefont
  {Streltsov}}, \bibinfo {author} {\bibfnamefont {U.}~\bibnamefont {Singh}},
  \bibinfo {author} {\bibfnamefont {H.~S.}\ \bibnamefont {Dhar}}, \bibinfo
  {author} {\bibfnamefont {M.~N.}\ \bibnamefont {Bera}}, \ and\ \bibinfo
  {author} {\bibfnamefont {G.}~\bibnamefont {Adesso}},\ }\href {\doibase
  10.1103/PhysRevLett.115.020403} {\bibfield  {journal} {\bibinfo  {journal}
  {Phys. Rev. Lett.}\ }\textbf {\bibinfo {volume} {115}},\ \bibinfo {pages}
  {020403} (\bibinfo {year} {2015})}\BibitemShut {NoStop}%
\bibitem [{\citenamefont {Rains}(1999)}]{Rains98}%
  \BibitemOpen
  \bibfield  {author} {\bibinfo {author} {\bibfnamefont {E.~M.}\ \bibnamefont
  {Rains}},\ }\href {\doibase 10.1103/PhysRevA.60.173} {\bibfield  {journal}
  {\bibinfo  {journal} {Phys. Rev. A}\ }\textbf {\bibinfo {volume} {60}},\
  \bibinfo {pages} {173} (\bibinfo {year} {1999})}\BibitemShut {NoStop}%
\bibitem [{\citenamefont {Clauser}\ \emph {et~al.}(1969)\citenamefont
  {Clauser}, \citenamefont {Horne}, \citenamefont {Shimony},\ and\
  \citenamefont {Holt}}]{CHSH}%
  \BibitemOpen
  \bibfield  {author} {\bibinfo {author} {\bibfnamefont {J.~F.}\ \bibnamefont
  {Clauser}}, \bibinfo {author} {\bibfnamefont {M.~A.}\ \bibnamefont {Horne}},
  \bibinfo {author} {\bibfnamefont {A.}~\bibnamefont {Shimony}}, \ and\
  \bibinfo {author} {\bibfnamefont {R.~A.}\ \bibnamefont {Holt}},\ }\href
  {\doibase 10.1103/PhysRevLett.23.880} {\bibfield  {journal} {\bibinfo
  {journal} {Phys. Rev. Lett.}\ }\textbf {\bibinfo {volume} {23}},\ \bibinfo
  {pages} {880} (\bibinfo {year} {1969})}\BibitemShut {NoStop}%
\bibitem [{\citenamefont {{Du}}\ \emph {et~al.}(2015)\citenamefont {{Du}},
  \citenamefont {{Bai}},\ and\ \citenamefont {{Qi}}}]{Du15a}%
  \BibitemOpen
  \bibfield  {author} {\bibinfo {author} {\bibfnamefont {S.}~\bibnamefont
  {{Du}}}, \bibinfo {author} {\bibfnamefont {Z.}~\bibnamefont {{Bai}}}, \ and\
  \bibinfo {author} {\bibfnamefont {X.}~\bibnamefont {{Qi}}},\ }\href@noop {}
  {\bibfield  {journal} {\bibinfo  {journal} {ArXiv e-prints}\ } (\bibinfo
  {year} {2015})},\ \Eprint {http://arxiv.org/abs/1504.02862} {arXiv:1504.02862
  [quant-ph]} \BibitemShut {NoStop}%
\bibitem [{\citenamefont {Du}\ \emph {et~al.}(2015)\citenamefont {Du},
  \citenamefont {Bai},\ and\ \citenamefont {Guo}}]{Du15b}%
  \BibitemOpen
  \bibfield  {author} {\bibinfo {author} {\bibfnamefont {S.}~\bibnamefont
  {Du}}, \bibinfo {author} {\bibfnamefont {Z.}~\bibnamefont {Bai}}, \ and\
  \bibinfo {author} {\bibfnamefont {Y.}~\bibnamefont {Guo}},\ }\href {\doibase
  10.1103/PhysRevA.91.052120} {\bibfield  {journal} {\bibinfo  {journal} {Phys.
  Rev. A}\ }\textbf {\bibinfo {volume} {91}},\ \bibinfo {pages} {052120}
  (\bibinfo {year} {2015})}\BibitemShut {NoStop}%
\bibitem [{\citenamefont {Nielsen}(1999)}]{Nielsen99}%
  \BibitemOpen
  \bibfield  {author} {\bibinfo {author} {\bibfnamefont {M.~A.}\ \bibnamefont
  {Nielsen}},\ }\href {\doibase 10.1103/PhysRevLett.83.436} {\bibfield
  {journal} {\bibinfo  {journal} {Phys. Rev. Lett.}\ }\textbf {\bibinfo
  {volume} {83}},\ \bibinfo {pages} {436} (\bibinfo {year} {1999})}\BibitemShut
  {NoStop}%
\bibitem [{\citenamefont {D\"ur}\ \emph {et~al.}(2000)\citenamefont {D\"ur},
  \citenamefont {Vidal},\ and\ \citenamefont {Cirac}}]{Dur00}%
  \BibitemOpen
  \bibfield  {author} {\bibinfo {author} {\bibfnamefont {W.}~\bibnamefont
  {D\"ur}}, \bibinfo {author} {\bibfnamefont {G.}~\bibnamefont {Vidal}}, \ and\
  \bibinfo {author} {\bibfnamefont {J.~I.}\ \bibnamefont {Cirac}},\ }\href
  {\doibase 10.1103/PhysRevA.62.062314} {\bibfield  {journal} {\bibinfo
  {journal} {Phys. Rev. A}\ }\textbf {\bibinfo {volume} {62}},\ \bibinfo
  {pages} {062314} (\bibinfo {year} {2000})}\BibitemShut {NoStop}%
\bibitem [{\citenamefont {Jonathan}\ and\ \citenamefont
  {Plenio}(1999)}]{Jonathan99}%
  \BibitemOpen
  \bibfield  {author} {\bibinfo {author} {\bibfnamefont {D.}~\bibnamefont
  {Jonathan}}\ and\ \bibinfo {author} {\bibfnamefont {M.~B.}\ \bibnamefont
  {Plenio}},\ }\href {\doibase 10.1103/PhysRevLett.83.3566} {\bibfield
  {journal} {\bibinfo  {journal} {Phys. Rev. Lett.}\ }\textbf {\bibinfo
  {volume} {83}},\ \bibinfo {pages} {3566} (\bibinfo {year}
  {1999})}\BibitemShut {NoStop}%
\bibitem [{\citenamefont {Eisert}\ and\ \citenamefont
  {Wilkens}(2000)}]{Eisert00}%
  \BibitemOpen
  \bibfield  {author} {\bibinfo {author} {\bibfnamefont {J.}~\bibnamefont
  {Eisert}}\ and\ \bibinfo {author} {\bibfnamefont {M.}~\bibnamefont
  {Wilkens}},\ }\href {\doibase 10.1103/PhysRevLett.85.437} {\bibfield
  {journal} {\bibinfo  {journal} {Phys. Rev. Lett.}\ }\textbf {\bibinfo
  {volume} {85}},\ \bibinfo {pages} {437} (\bibinfo {year} {2000})}\BibitemShut
  {NoStop}%
\bibitem [{\citenamefont {Mayers}\ and\ \citenamefont {Yao}(1998)}]{Mayers98}%
  \BibitemOpen
  \bibfield  {author} {\bibinfo {author} {\bibfnamefont {D.}~\bibnamefont
  {Mayers}}\ and\ \bibinfo {author} {\bibfnamefont {A.}~\bibnamefont {Yao}},\
  }in\ \href {http://dl.acm.org/citation.cfm?id=795664.796390} {\emph {\bibinfo
  {booktitle} {Proceedings of the 39th Annual Symposium on Foundations of
  Computer Science}}},\ \bibinfo {series and number} {FOCS '98}\ (\bibinfo
  {publisher} {IEEE Computer Society},\ \bibinfo {address} {Washington, DC,
  USA},\ \bibinfo {year} {1998})\ pp.\ \bibinfo {pages} {503--}\BibitemShut
  {NoStop}%
\bibitem [{\citenamefont {Ac\'in}\ \emph {et~al.}(2006)\citenamefont {Ac\'in},
  \citenamefont {Gisin},\ and\ \citenamefont {Masanes}}]{Acin}%
  \BibitemOpen
  \bibfield  {author} {\bibinfo {author} {\bibfnamefont {A.}~\bibnamefont
  {Ac\'in}}, \bibinfo {author} {\bibfnamefont {N.}~\bibnamefont {Gisin}}, \
  and\ \bibinfo {author} {\bibfnamefont {L.}~\bibnamefont {Masanes}},\ }\href
  {\doibase 10.1103/PhysRevLett.97.120405} {\bibfield  {journal} {\bibinfo
  {journal} {Phys. Rev. Lett.}\ }\textbf {\bibinfo {volume} {97}},\ \bibinfo
  {pages} {120405} (\bibinfo {year} {2006})}\BibitemShut {NoStop}%
\bibitem [{\citenamefont {Vazirani}\ and\ \citenamefont
  {Vidick}(2012)}]{vazirani2011certifiable}%
  \BibitemOpen
  \bibfield  {author} {\bibinfo {author} {\bibfnamefont {U.}~\bibnamefont
  {Vazirani}}\ and\ \bibinfo {author} {\bibfnamefont {T.}~\bibnamefont
  {Vidick}},\ }in\ \href {\doibase 10.1145/2213977.2213984} {\emph {\bibinfo
  {booktitle} {Proceedings of the Forty-fourth Annual ACM Symposium on Theory
  of Computing}}},\ \bibinfo {series and number} {STOC '12}\ (\bibinfo
  {publisher} {ACM},\ \bibinfo {address} {New York, NY, USA},\ \bibinfo {year}
  {2012})\ pp.\ \bibinfo {pages} {61--76}\BibitemShut {NoStop}%
\bibitem [{\citenamefont {Lo}\ \emph {et~al.}(2012)\citenamefont {Lo},
  \citenamefont {Curty},\ and\ \citenamefont {Qi}}]{Lo12}%
  \BibitemOpen
  \bibfield  {author} {\bibinfo {author} {\bibfnamefont {H.-K.}\ \bibnamefont
  {Lo}}, \bibinfo {author} {\bibfnamefont {M.}~\bibnamefont {Curty}}, \ and\
  \bibinfo {author} {\bibfnamefont {B.}~\bibnamefont {Qi}},\ }\href {\doibase
  10.1103/PhysRevLett.108.130503} {\bibfield  {journal} {\bibinfo  {journal}
  {Phys. Rev. Lett.}\ }\textbf {\bibinfo {volume} {108}},\ \bibinfo {pages}
  {130503} (\bibinfo {year} {2012})}\BibitemShut {NoStop}%
\bibitem [{\citenamefont {Braunstein}\ and\ \citenamefont
  {Pirandola}(2012)}]{Braunstein12}%
  \BibitemOpen
  \bibfield  {author} {\bibinfo {author} {\bibfnamefont {S.~L.}\ \bibnamefont
  {Braunstein}}\ and\ \bibinfo {author} {\bibfnamefont {S.}~\bibnamefont
  {Pirandola}},\ }\href {\doibase 10.1103/PhysRevLett.108.130502} {\bibfield
  {journal} {\bibinfo  {journal} {Phys. Rev. Lett.}\ }\textbf {\bibinfo
  {volume} {108}},\ \bibinfo {pages} {130502} (\bibinfo {year}
  {2012})}\BibitemShut {NoStop}%
\bibitem [{\citenamefont {Branciard}\ \emph {et~al.}(2013)\citenamefont
  {Branciard}, \citenamefont {Rosset}, \citenamefont {Liang},\ and\
  \citenamefont {Gisin}}]{Branciard13}%
  \BibitemOpen
  \bibfield  {author} {\bibinfo {author} {\bibfnamefont {C.}~\bibnamefont
  {Branciard}}, \bibinfo {author} {\bibfnamefont {D.}~\bibnamefont {Rosset}},
  \bibinfo {author} {\bibfnamefont {Y.-C.}\ \bibnamefont {Liang}}, \ and\
  \bibinfo {author} {\bibfnamefont {N.}~\bibnamefont {Gisin}},\ }\href
  {\doibase 10.1103/PhysRevLett.110.060405} {\bibfield  {journal} {\bibinfo
  {journal} {Phys. Rev. Lett.}\ }\textbf {\bibinfo {volume} {110}},\ \bibinfo
  {pages} {060405} (\bibinfo {year} {2013})}\BibitemShut {NoStop}%
\bibitem [{\citenamefont {Xu}\ \emph {et~al.}(2014)\citenamefont {Xu},
  \citenamefont {Yuan}, \citenamefont {Chen}, \citenamefont {Lu}, \citenamefont
  {Yao}, \citenamefont {Ma}, \citenamefont {Chen},\ and\ \citenamefont
  {Pan}}]{Xu14}%
  \BibitemOpen
  \bibfield  {author} {\bibinfo {author} {\bibfnamefont {P.}~\bibnamefont
  {Xu}}, \bibinfo {author} {\bibfnamefont {X.}~\bibnamefont {Yuan}}, \bibinfo
  {author} {\bibfnamefont {L.-K.}\ \bibnamefont {Chen}}, \bibinfo {author}
  {\bibfnamefont {H.}~\bibnamefont {Lu}}, \bibinfo {author} {\bibfnamefont
  {X.-C.}\ \bibnamefont {Yao}}, \bibinfo {author} {\bibfnamefont
  {X.}~\bibnamefont {Ma}}, \bibinfo {author} {\bibfnamefont {Y.-A.}\
  \bibnamefont {Chen}}, \ and\ \bibinfo {author} {\bibfnamefont {J.-W.}\
  \bibnamefont {Pan}},\ }\href {\doibase 10.1103/PhysRevLett.112.140506}
  {\bibfield  {journal} {\bibinfo  {journal} {Phys. Rev. Lett.}\ }\textbf
  {\bibinfo {volume} {112}},\ \bibinfo {pages} {140506} (\bibinfo {year}
  {2014})}\BibitemShut {NoStop}%
\bibitem [{\citenamefont {Colbeck}(2009)}]{colbeck2009quantum}%
  \BibitemOpen
  \bibfield  {author} {\bibinfo {author} {\bibfnamefont {R.}~\bibnamefont
  {Colbeck}},\ }\href@noop {} {\bibfield  {journal} {\bibinfo  {journal} {arXiv
  preprint arXiv:0911.3814}\ } (\bibinfo {year} {2009})}\BibitemShut {NoStop}%
\bibitem [{\citenamefont {Gabriel}\ \emph {et~al.}(2010)\citenamefont
  {Gabriel}, \citenamefont {Wittmann}, \citenamefont {Sych}, \citenamefont
  {Dong}, \citenamefont {Mauerer}, \citenamefont {Andersen}, \citenamefont
  {Marquardt},\ and\ \citenamefont {Leuchs}}]{gabriel2010generator}%
  \BibitemOpen
  \bibfield  {author} {\bibinfo {author} {\bibfnamefont {C.}~\bibnamefont
  {Gabriel}}, \bibinfo {author} {\bibfnamefont {C.}~\bibnamefont {Wittmann}},
  \bibinfo {author} {\bibfnamefont {D.}~\bibnamefont {Sych}}, \bibinfo {author}
  {\bibfnamefont {R.}~\bibnamefont {Dong}}, \bibinfo {author} {\bibfnamefont
  {W.}~\bibnamefont {Mauerer}}, \bibinfo {author} {\bibfnamefont {U.~L.}\
  \bibnamefont {Andersen}}, \bibinfo {author} {\bibfnamefont {C.}~\bibnamefont
  {Marquardt}}, \ and\ \bibinfo {author} {\bibfnamefont {G.}~\bibnamefont
  {Leuchs}},\ }\href@noop {} {\bibfield  {journal} {\bibinfo  {journal} {Nature
  Photonics}\ }\textbf {\bibinfo {volume} {4}},\ \bibinfo {pages} {711}
  (\bibinfo {year} {2010})}\BibitemShut {NoStop}%
\bibitem [{\citenamefont {Xu}\ \emph {et~al.}(2012)\citenamefont {Xu},
  \citenamefont {Qi}, \citenamefont {Ma}, \citenamefont {Xu}, \citenamefont
  {Zheng},\ and\ \citenamefont {Lo}}]{xu2012ultrafast}%
  \BibitemOpen
  \bibfield  {author} {\bibinfo {author} {\bibfnamefont {F.}~\bibnamefont
  {Xu}}, \bibinfo {author} {\bibfnamefont {B.}~\bibnamefont {Qi}}, \bibinfo
  {author} {\bibfnamefont {X.}~\bibnamefont {Ma}}, \bibinfo {author}
  {\bibfnamefont {H.}~\bibnamefont {Xu}}, \bibinfo {author} {\bibfnamefont
  {H.}~\bibnamefont {Zheng}}, \ and\ \bibinfo {author} {\bibfnamefont {H.-K.}\
  \bibnamefont {Lo}},\ }\href@noop {} {\bibfield  {journal} {\bibinfo
  {journal} {Optics express}\ }\textbf {\bibinfo {volume} {20}},\ \bibinfo
  {pages} {12366} (\bibinfo {year} {2012})}\BibitemShut {NoStop}%
\end{thebibliography}%



\end{document}